\documentclass[11pt,letterpaper]{article}

\usepackage[T1]{fontenc}
\usepackage{lmodern}
\usepackage{microtype}
\usepackage[margin=1in]{geometry}
\usepackage{amsmath,amssymb,amsthm,mathtools}
\usepackage{booktabs}
\usepackage{enumitem}
\usepackage{algorithm}
\usepackage[noend]{algpseudocode}
\usepackage{xcolor}
\usepackage{url}
\usepackage{aliascnt}
\usepackage[colorlinks=true,citecolor=violet]{hyperref}
\usepackage[nameinlink,capitalise,noabbrev]{cleveref}


\newtheorem{theorem}{Theorem}[section]

\newaliascnt{lemma}{theorem}
\newtheorem{lemma}[lemma]{Lemma}
\aliascntresetthe{lemma}

\newaliascnt{proposition}{theorem}

\aliascntresetthe{proposition}

\newaliascnt{corollary}{theorem}
\newtheorem{corollary}[corollary]{Corollary}
\aliascntresetthe{corollary}

\newaliascnt{claim}{theorem}

\aliascntresetthe{claim}

\theoremstyle{definition}
\newaliascnt{definition}{theorem}

\aliascntresetthe{definition}

\newaliascnt{remark}{theorem}
\newtheorem{remark}[remark]{Remark}
\aliascntresetthe{remark}

\crefname{theorem}{Theorem}{Theorems}
\Crefname{theorem}{Theorem}{Theorems}
\crefname{lemma}{Lemma}{Lemmas}
\Crefname{lemma}{Lemma}{Lemmas}
\crefname{proposition}{Proposition}{Propositions}
\Crefname{proposition}{Proposition}{Propositions}
\crefname{corollary}{Corollary}{Corollaries}
\Crefname{corollary}{Corollary}{Corollaries}
\crefname{claim}{Claim}{Claims}
\Crefname{claim}{Claim}{Claims}
\crefname{remark}{Remark}{Remarks}
\Crefname{remark}{Remark}{Remarks}

\newcommand{\cut}{\mathsf{cut}}

\newcommand{\Reach}{\operatorname{Reach}}

\newcommand{\eps}{\varepsilon}

\newcommand{\cT}{\mathcal{T}}

\newcommand{\defeq}{\vcentcolon=}
\newcommand{\intervalP}[2]{P_{(#1,#2]}}
\newcommand{\Cycle}{\textnormal{\textsc{cycle}}}

\title{Reachability in Directed Acyclic Graphs\\with Near-Linear Cut Queries}
\author{Sanjeev Khanna\thanks{Courant Institute, Warren Weaver Hall, New York University, New York, NY 10012.  Supported in part by NSF award CCF-2625203 and AFOSR award FA9550-25-1-0107. Email: {\tt sanjeev.khanna@nyu.edu}.} \and Aaron Putterman\thanks{School of Engineering and Applied Sciences, Harvard University, Cambridge, Massachusetts, USA. Supported in part by the Simons Investigator Awards of Madhu Sudan and Salil Vadhan, AFOSR award FA9550-25-1-0112, and a Jane Street Graduate Research Fellowship. Email: \texttt{aputterman@g.harvard.edu}.} \and Junkai Song\thanks{Courant Institute, Warren Weaver Hall, New York University, New York, NY 10012. Supported in part by NSF award CCF-2625203. Email: \texttt{junkaisong@nyu.edu}.}}
\date{July 2026}

\begin{document}
\maketitle

\begin{abstract}
    In the cut-query model, an algorithm is given access to a graph $G = (V, E)$ \emph{only} via cut queries. This model has seen significant attention in the undirected graph setting, with works establishing $O(n)$ cut query algorithms for computing the global minimum cut, $\widetilde{O}(n^{3/2})$ cut query algorithms for all pairs minimum cut, and many more. However, despite this vast array of progress in designing sub-quadratic query algorithms for computing properties of undirected graphs, there has been \emph{no} progress in designing such algorithms in directed graphs. Indeed, even for basic problems like whether a vertex $t$ is reachable from a vertex $s$, the cut query complexity is only known to be bounded in the interval $[\Omega(n), O(n^2 / \log n)]$.

    In this work, we begin a systematic study of these basic problems in directed \emph{acyclic} graphs (DAGs). In this setting, we show that reachability from a single vertex and even topological sorting are both computable in $O(n \log^3 n)$ many cut queries.
    As a consequence, we also obtain an algorithm which, for any \emph{arbitrary} directed graph $G$, uses only $O(n \log^3 n)$ cut queries and determines whether $G$ contains a cycle.
\end{abstract}

\section{Introduction}

A basic goal in theoretical computer science is to design efficient algorithms for computing properties of graphs. Naturally, such algorithms themselves can vary significantly depending on \emph{how} the graph is represented. 
A recent line of study, first proposed in the work of Rubinstein, Schramm, and Weinberg \cite{RSW18}, studies the design of graph algorithms when the only access to the graph $G = (V, E)$ is in terms of \emph{cut queries}. In this model, the algorithm can query the \emph{cut-oracle} with any subset of vertices $S \subseteq V$, and the oracle reports the \emph{number} of edges that are cut by this set $S$. 

To demonstrate the power of these cut queries, one can observe that in undirected graphs, using only $\binom{n}{2} + n$ queries suffices for \emph{exactly} reconstructing the entire graph at hand. Indeed, one simply queries all singleton vertices $v \in V$ and all pairs of vertices $\{u,v \} \in \binom{V}{2}$. Now, to test for the presence of any single edge $(u,v) \in E$, one must only check whether 
\[
\cut(\{u\}) + \cut(\{v\}) = \cut(\{u,v \}).
\]

With this simple reconstruction baseline established, the key question in works studying the cut-query model is to determine which algorithmic problems can be solved using only \emph{sub-quadratically} many cut queries. Indeed, already in this aforementioned work of Rubinstein, Schramm, and Weinberg \cite{RSW18}, they showed that the global minimum cut value of an undirected simple graph $G$ can be computed using $\widetilde{O}(n)$ randomized cut queries, and any $s$--$t$ minimum cut value can be computed using $\widetilde{O}(n^{5/3})$ randomized cut queries.

These bounds have been improved in subsequent works. \cite{MN20} gave a randomized algorithm using $\widetilde{O}(n)$ cut queries for the global minimum cut in undirected \emph{weighted} graphs, while \cite{AEGLMN22} improved the query complexity to $O(n)$ for undirected simple graphs. \cite{ASW25} obtained the first \emph{deterministic} algorithm for global minimum cut and $s$--$t$ minimum cut using $\widetilde O(n^{5/3})$ queries. \cite{JNS26} improved the randomized query complexity for exact $s$--$t$ minimum cut to $\widetilde{O}(n^{8/5})$, and \cite{KK26} gave a randomized algorithm for computing all-pairs minimum cuts using only $\widetilde{O}(n^{3/2})$ queries.
This line of work has also expanded to efficient query algorithms for other algorithmic tasks. 
For instance, \cite{PRW24} presented an algorithm to compute $(1/2 - \eps)$-approximate maximum cut in $O(\log n)$ cut queries, and $(1-\eps)$-approximate maximum cut in $\widetilde{O}(n)$ cut queries.

Despite this vast array of work presenting efficient (sub-quadratic) cut-query algorithms, there is a notable, glaring omission. Namely, in the setting of \emph{directed graphs}, even the most basic connectivity questions are not well-understood. In this setting, each edge is \emph{directed} from a vertex $u$ (tail) to a vertex $v$ (head), and an edge $(u \rightarrow v)$ is \emph{cut} by a set $S$ if $u \in S$ and $v \in \bar{S}$. It is not hard to see that in this model, cut queries do not suffice for explicitly recovering the underlying graph; indeed, one can consider a cycle of length $3$ on vertices $a, b, c$. Both orientations of the cycle $a \rightarrow b \rightarrow c \rightarrow a$ and $a \rightarrow c \rightarrow b \rightarrow a$ yield the same cut query values for all cuts $S \subseteq V$. Nevertheless, if two directed graphs have the exact same cut-values for every cut $S \subseteq V$, then they also have the \emph{same} reachability structure. A vertex $b$ is reachable from a vertex $a$ if and only if there is no cut $S$ such that $a \in S$, $b \in \bar{S}$ and $\cut(S) = 0$. Beyond this, one can even observe that it is possible to reconstruct the \emph{entire} cut function of a directed graph in only $O(n^2)$ many directed cut queries (even $O(n^2 / \log n)$ many queries).\footnote{Note that this is not a formally published result, see \url{https://sites.google.com/site/dannanongkai/open?authuser=0} and \url{https://nextcloud.mpi-klsb.mpg.de/index.php/s/jBobpANjnjeJo3w?dir=/&editing=false&openfile=true} for discussion.}

Despite this, no nontrivial directed graph problem is currently known to admit a sub-quadratic cut-query algorithm. Even the seemingly simplest problem, $s$--$t$ reachability, remains open:
\begin{quote}
    \emph{Given two vertices $s, t$, how many cut queries are required to determine if there is a path from $s$ to $t$?}
\end{quote}

This question is further motivated by its close connection to submodular function minimization (SFM). Given value oracle access to a submodular function $f:2^{[n]} \to \mathbb{R}$, the goal is to find a minimizer of $f$ using as few value queries as possible. The best known algorithm uses $\widetilde{O}(n^2)$ queries \cite{jiang2022minimizing}, while the best lower bound remains $\widetilde{\Omega}(n)$ \cite{CGJS22}. Since the cut function of a directed graph is submodular, understanding the query complexity of directed cut functions may provide insight to the complexity of SFM. 

In this work we resolve this reachability problem, up to polylogarithmic factors, on directed \emph{acyclic} graphs (DAGs).

\paragraph{Our Results}

 Our first theorem constructs a topological order of any DAG via only $\widetilde{O}(n)$ many cut queries:

\begin{theorem}[Topological ordering and cycle detection]\label{thm:topological-main}
 Let $G=(V,E)$ be a directed graph on $n$ vertices, given through its directed cut oracle.
	There is a deterministic algorithm using $O(n\log^3 n)$ cut queries that either
 \begin{enumerate}[label=(\roman*),nosep]
		\item outputs a topological ordering of $G$, or
		 \item correctly reports that $G$ contains a directed cycle.
	\end{enumerate}
 In particular, on a DAG the algorithm always outputs a topological ordering.
\end{theorem}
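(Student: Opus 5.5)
The plan is a recursive (merge-sort style) construction of a candidate ordering, followed by a cheap certificate check. The certificate check is what makes cycle detection come for free.

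\emph{Step 1: verifying a candidate order with $n$ queries.} Write $d^+(v)=\cut(\{v\})$ and $d^-(v)=\cut(V\setminus\{v\})$; these cost $2n$ queries. Fix any ordering $\pi=(v_1,\dots,v_n)$ with prefixes $P_i=\{v_1,\dots,v_i\}$. An edge $u\to w$ contributes to $\cut(P_i)$ exactly when $\pi(u)\le i<\pi(w)$. Hence
\[
\sum_{i=1}^{n-1}\cut(P_i)=\sum_{u\to w,\ \pi(u)<\pi(w)}\bigl(\pi(w)-\pi(u)\bigr).
\]
By contrast, the signed sum $\sum_{u\to w}(\pi(w)-\pi(u))=\sum_v \pi(v)\bigl(d^-(v)-d^+(v)\bigr)$ is computable from the degrees alone. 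The two quantities agree if and only if $\pi$ has no backward edge. So one pass of $n-1$ prefix queries decides whether $\pi$ is a topological order.

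\emph{Step 2: finishing the theorem from Step 1.} The main algorithm runs a procedure that, on a DAG, is guaranteed to output a topological order within $O(n\log^3 n)$ queries. We cap its query budget and verify its output with Step 1. If it exceeds the budget, outputs nothing, or outputs an order that fails the check, we report a cycle. This is correct because a graph with a cycle has no topological order, so any reported order that passes Step 1 certifies acyclicity.

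\emph{Step 3: the sorting procedure, by recursive merging.} We use two primitives.
\begin{itemize}[nosep]
\item For disjoint $A,B$, the symmetric count $e(A,B)+e(B,A)=\cut(A)+\cut(B)-\cut(A\cup B)$ takes $O(1)$ queries.
\item If one already knows $e(B,A)=0$, this gives $e(A,B)$ exactly.
\end{itemize}
The second primitive is the key to separating directions in a DAG.

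Split $V$ into halves, topologically sort each recursively, and merge the two sorted lists $L_1,L_2$ (each a topological order of its induced subgraph). Merging amounts to interleaving. Let $x$ be the first vertex of $L_1$, and let $j$ be the largest index such that $x$ has no in-edge from the prefix $L_2[1..j]$ and no vertex of $L_1$ has an edge into that prefix. Then $L_2[1..j]$ can be placed before $x$. Finding $j$ is a monotone predicate over prefixes of $L_2$, so binary search applies. At each probe we need the number of edges from the candidate set into the prefix. This count is obtained by a prefix-sum identity in the style of Step 1, restricted to the partial order already built: edges inside each sorted list are known to go forward, which supplies the ``$e(B,A)=0$'' hypothesis the second primitive needs.

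To avoid paying a binary search per vertex naively, we do the merge in blocks by \emph{galloping} (exponential then binary search). Each merge of total size $k$ then costs $O(k\log k)$ probes, and each probe costs $O(\log n)$ queries if the needed edge counts are assembled from prefix-cut queries through a balanced decomposition. The recursion has $O(\log n)$ levels, which gives the $O(n\log^3 n)$ total.

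\emph{Main obstacle.} The hardest step is the probe inside the merge. We must decide whether some vertex of $L_1$ (or an $L_1$-prefix) has an edge into a given $L_2$-prefix using only symmetric cut information. Reverse edges in the other direction are not excluded a priori, so the symmetric count alone may not separate them. My first attempt is to query unions of the form (prefix of $L_1$) $\cup$ (prefix of $L_2$) and use the within-list forward property, together with Step 1's telescoping, to cancel all within-list terms. What remains should be $e(L_1[1..a],L_2[b+1..])+e(L_2[1..b],L_1[a+1..])$. Monotonicity in $(a,b)$ of each term should then let a two-dimensional staircase search isolate the first crossing edge.

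If acyclicity fails, this argument can produce a bogus answer, but that bogus answer is caught by the Step 1 verification.
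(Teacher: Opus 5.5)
\textbf{There is a genuine gap in Step 3, and it is in the decomposition itself, not only in the probe you flag.} You assume that topological orders $L_1$ of $G[V_1]$ and $L_2$ of $G[V_2]$ can always be interleaved into a topological order of $G$. This is false. A path that leaves $V_1$, runs through $V_2$, and returns constrains the order inside $V_1$ in a way that $G[V_1]$ cannot see.

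Take $V=\{a,b,c\}$ with arcs $(a,c),(c,b)$, $V_1=\{a,b\}$ and $V_2=\{c\}$. Since $G[V_1]$ has no arcs, $L_1=(b,a)$ is a legitimate recursive output. Yet every interleaving keeps $b$ before $a$, while $a\to c\to b$ forces the opposite.

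Requiring each $L_i$ to be the restriction of \emph{some} topological order of $G$ does not rescue the scheme. Take arcs $(a,d),(c,b)$ with $V_1=\{a,b\}$ and $V_2=\{c,d\}$. The lists $L_1=(b,a)$ and $L_2=(d,c)$ are restrictions of $(c,b,a,d)$ and $(a,d,c,b)$ respectively, but an interleaving would need $b\prec a\prec d\prec c\prec b$.

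On such DAGs every candidate your merge can produce fails the Step 1 check, so Step 2 reports a cycle on an acyclic graph. No better probe fixes this. The arbitrary balanced split must be replaced by a predecessor-closed set $S$ with $E(V\setminus S,S)=\varnothing$, after which you concatenate rather than merge. Such sets are exactly the peeled prefixes $P_t$ of the paper (\cref{lem:source-prefix}), and their one-way boundary is also what makes directed counts computable at all. Producing them cheaply is the real content of the theorem. The paper does it with Kahn-style source peeling and lazily maintained in-degrees:
\begin{itemize}
\item buckets $A_i$ by in-degree scale;
\item refinement trees whose nodes are opened only after at least $\lambda_i$ new deletions accumulate;
\item exact offset accounting;
\item an amortized bound of $O(H\,r(x))$ openings per node.
\end{itemize}

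Even granting a merge, two further pieces are missing. First, the recursive call on $V_1$ has no cut oracle for $G[V_1]$. For $S\subseteq V_1$, $\cut(S)$ also counts $E(S,V\setminus V_1)$, and arcs between $V_1$ and its complement carry no one-way guarantee. So for those arcs you can only recover symmetric counts $|E(X,Y)|+|E(Y,X)|$, and neither your Step 1 telescoping nor your union-of-prefixes cancellation can be run inside a subproblem. Second, the probe itself, and its cost of $O(\log n)$ queries, is asserted with a ``should'' rather than derived.

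Your Steps 1--2 are correct: the two sums differ by exactly $\sum(\pi(u)-\pi(w))$ over backward arcs $(u,w)$. They give a valid alternative route to cycle detection; the paper instead derives it from the exactness of the source set $Z(t)$. But they only reduce the theorem to sorting a DAG within the query budget, which is where all the difficulty lies.
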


 Once an ordering is known, directed edge counts from one vertex into an arbitrary set can be
isolated with constantly many cut queries.  A binary search then finds one outgoing edge in
logarithmically many queries, which permits a standard depth-first search. Letting $\Reach_G(s)$ denote \emph{all} vertices reachable from a vertex $s$ in a graph $G$, we then have that:

\begin{theorem}[Reachability with a topological order]\label{thm:reach-main}
	Let $G=(V,E)$ be an $n$-vertex DAG, let $s\in V$, and suppose a topological ordering of $G$ is
	given.  There is a deterministic algorithm that returns $\Reach_G(s)$ using $O(n\log n)$ directed
	cut queries.
	\end{theorem}

Combining the two algorithms gives the following immediate consequence.
 \begin{corollary}\label{cor:reach-from-scratch}
Single-source reachability in an $n$-vertex DAG can be solved deterministically with
$O(n\log^3 n)$ directed cut queries.
\end{corollary}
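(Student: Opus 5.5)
The corollary follows by running the algorithm of \cref{thm:topological-main} and then the algorithm of \cref{thm:reach-main} on its output, and adding the two query counts.

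\textbf{Step 1: compute an ordering.} Given an $n$-vertex DAG $G$ and a source $s$, I first run the deterministic algorithm of \cref{thm:topological-main} on $G$ through its cut oracle. This uses $O(n\log^3 n)$ cut queries. Since $G$ is acyclic, the ``in particular'' clause of that theorem applies: the algorithm cannot report a cycle, so it outputs a valid topological ordering of $G$.

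\textbf{Step 2: compute reachability.} I then feed this ordering, together with $s$, to the algorithm of \cref{thm:reach-main}. It returns $\Reach_G(s)$ deterministically using $O(n\log n)$ further queries. Its hypotheses hold exactly: $G$ is an $n$-vertex DAG and a correct topological ordering is supplied. The ordering comes from Step 1 at no extra query cost, since it is simply the output of that step.

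\textbf{Query count and determinism.} The total number of queries is $O(n\log^3 n)+O(n\log n)=O(n\log^3 n)$. Both subroutines are deterministic, so their composition is deterministic as well.

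The only point needing care is that \cref{thm:reach-main} requires a \emph{correct} ordering. This is guaranteed because the input is promised to be a DAG, so the cycle-reporting branch of \cref{thm:topological-main} never fires. If one wanted robustness without the promise, a reported cycle could be returned as a certificate that the input is not a DAG.
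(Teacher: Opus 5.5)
Your proposal is correct and follows the paper's proof exactly: run \cref{alg:topological}, which on a DAG returns a topological ordering with $O(n\log^3 n)$ queries by \cref{thm:topological-main}, then run \cref{alg:explore} with that ordering for $O(n\log n)$ further queries by \cref{thm:reach-main}. The extra care you take, noting that the ordering must be valid and that the cycle branch cannot fire on a DAG, is exactly what the ``in particular'' clause of \cref{thm:topological-main} guarantees.
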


\paragraph{Technical overview.}
The starting point of our cut query algorithm is Kahn's classical topological-sorting procedure, which repeatedly removes a vertex of residual in-degree
zero.  Naturally, the resulting sequence of peeled vertices constitutes a topological ordering.
Implementing such a procedure is not hard in the cut-query model. Indeed, each query of the form $\cut(V \setminus \{v\})$ exactly reports the in-degree of a vertex $v$, and thus quickly enables discovery of any in-degree $0$ vertex. For any such vertex, one can then identify all edges incident to $v$ via the \emph{undirected} edge recovery procedure. The key point is that because $v$ has in-degree $0$, given any neighbor $u$ of $v$, one can immediately infer the direction of the edge between them (namely, that the edge goes from $v$ to $u$).
The obstacle in the cut-query model is that explicitly updating all residual in-degrees after
each removal can cost quadratic queries, as this procedure is implicitly learning the entire ``undirectification'' of the graph $G$.
Rather than explicitly updating and maintaining all residual in-degrees, our algorithm relies on a lazy maintenance data structure.

As we peel vertices with residual in-degree $0$, we periodically update our estimates for the in-degrees of the remaining vertices.
A residual vertex $v$ whose last verified positive in-degree lies in $[2^i,2^{i+1})$ is assigned to
a bucket $A_i$.  Observe that for any such vertex, in order for it to become an in-degree $0$ vertex, it must then have $\geq 2^i$ of its ancestors peeled in our topological sort. It is exactly this observation which allows us to avoid \emph{explicitly} updating the residual in-degree of $v$ after every new vertex is peeled. 

In more detail, each bucket  $A_i$ owns a balanced binary tree whose leaves are the vertices, and whose interior nodes represent the union of all of its descendants.  Whenever a new
source (residual in-degree $0$ vertex) is peeled, the root of every tree (i.e., the set of all leaf nodes) is queried to determine the \emph{total} number of in-edges
that vertices in the bucket have lost since by removing this source.  A child of the root is subsequently opened only if at
least
\[
  \lambda_i = \max\left\{1,
    \left\lfloor \frac{2^i}{4(H+1)}\right\rfloor\right\},
  \qquad H=\left\lceil\log_2\max\{n,2\}\right\rceil,
\]
new deletions have accumulated in that child's subtree.  At a leaf (which represents a single vertex), the corresponding vertex's
current in-degree is then computed exactly; the vertex is either declared a source or moved to the
appropriate lower bucket. In this way, queries to a single vertex are delayed until we are sure that \emph{a substantial number} of in-neighbors of the vertex have been removed. At the same time, because of our tracking of the in-degree decrements at higher nodes in the tree, we can still be sure that until we query a leaf node, it must have in-degree $> 0$. 

The remaining subtlety in this plan is that a vertex can enter a bucket after some tree nodes were last opened.  Merely
counting all deletions since a node's last opening would then charge the vertex for deletions that
occurred before it joined the bucket.  We therefore store, for every active vertex and every
ancestor node, an offset equal to precisely this pre-entry contribution.  The resulting decrease
counter is exact.

Two complementary estimates yield the proof.  First, along a root-to-leaf path, every unopened
segment hides fewer than $\lambda_i$ deletions.  There are only $H+1$ such segments, so a vertex
remaining in $A_i$ still has positive in-degree.  Hence the algorithm never misses a residual
source.  Second, opening a node can be charged to $\lambda_i$ new in-edge deletions in its subtree.
A vertex contributes fewer than $2^{i+1}$ deletions while it belongs to $A_i$, which implies that a
node with $r$ leaves is opened only $O(rH)$ times.  Summing over levels and buckets gives
$O(nH^3)=O(n\log^3 n)$ queries.

\paragraph{Note.} The authors recently learned that this same problem has been independently and concurrently studied by two separate groups in addition to our own. Both groups (a) Ben Bals, Yasamin Nazari, and Matei Tinca and (b) Deeparnab Chakrabarty and Andrew Zhao discovered $O(n^{3/2} \sqrt{\log(n)})$ cut query algorithms for computing single-source reachability in DAGs via topological sorting. 

\section{Preliminaries}\label{sec:prelim}

All logarithms are base two.  The input is a finite loopless simple directed graph
$G=(V,E)$ with $|V|=n$.  For sets $X,Y\subseteq V$, write
\[
  E(X,Y) \defeq \{(x,y)\in E:x\in X,\ y\in Y\}.
\]
For a vertex $v$, we abbreviate $E(\{v\},Y)$ to $E(v,Y)$ and similarly for $E(X,v)$.
The directed cut oracle returns
\[
  \cut(S) \defeq |E(S,V\setminus S)|
\]
for each queried set $S\subseteq V$.  The query complexity counts oracle calls; all local
computation and storage are free.  The assumption that the graph is loopless is necessary for
cycle detection, since self-loops are invisible to every cut query.  Simplicity ensures that all
in-degrees are at most $n-1$; polynomially bounded arc multiplicities can be handled by adding
corresponding logarithmic scales.

We begin with the elementary identity that underlies every directional edge-count query in the
paper.

\begin{lemma}[Symmetric crossing identity]\label{lem:symmetric-crossing}
For disjoint sets $X,Y\subseteq V$,
\[
  \cut(X)+\cut(Y)-\cut(X\cup Y)
  = |E(X,Y)|+|E(Y,X)|.
\]
\end{lemma}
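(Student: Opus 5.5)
The plan is a direct edge-counting argument: decompose each of the three cut values according to where the tails and heads of the crossing edges lie, and then check which terms cancel.

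Write $Z \defeq V\setminus(X\cup Y)$. Because $X$, $Y$, $Z$ partition $V$, every crossing set splits as a disjoint union:
\[
  E(X,V\setminus X)=E(X,Y)\sqcup E(X,Z),\qquad
  E(Y,V\setminus Y)=E(Y,X)\sqcup E(Y,Z),
\]
and
\[
  E(X\cup Y,Z)=E(X,Z)\sqcup E(Y,Z).
\]
The first two identities hold since $V\setminus X = Y\sqcup Z$ and $V\setminus Y = X\sqcup Z$ (using $X\cap Y=\emptyset$). The third holds since $X\cup Y$ is the disjoint union of $X$ and $Y$. Taking cardinalities gives
\[
  \cut(X)=|E(X,Y)|+|E(X,Z)|,\qquad \cut(Y)=|E(Y,X)|+|E(Y,Z)|,
\]
\[
  \cut(X\cup Y)=|E(X,Z)|+|E(Y,Z)|.
\]

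Substituting these into $\cut(X)+\cut(Y)-\cut(X\cup Y)$, the terms $|E(X,Z)|$ and $|E(Y,Z)|$ each appear once with a plus sign and once with a minus sign, so they cancel. What remains is $|E(X,Y)|+|E(Y,X)|$, which is the claimed identity.

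The only step that needs any care is the disjointness of the decompositions, and that follows immediately from $X\cap Y=\emptyset$ and the definition of $Z$; I do not expect any real obstacle beyond that. Loops play no role: a loop $(v,v)$ has its tail and head in the same part of any partition, so it never appears in any of the edge sets above.
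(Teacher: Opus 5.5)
Your proof is correct and matches the paper's argument. Like the paper, you introduce the exterior $V\setminus(X\cup Y)$ (the paper calls it $R$), expand the three cut values over the resulting three-part partition, and cancel the two terms that count edges into the exterior.
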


\begin{proof}
Let $R=V\setminus(X\cup Y)$.  Expanding the three cut values gives
\begin{align*}
\cut(X) &= |E(X,Y)|+|E(X,R)|,\\
\cut(Y) &= |E(Y,X)|+|E(Y,R)|,\\
\cut(X\cup Y) &= |E(X,R)|+|E(Y,R)|.
\end{align*}
Subtracting the last equality from the first two proves the identity.
\end{proof}

The topological-sorting algorithm peels vertices one at a time.  We use the following notation.
If $p_1,p_2,\ldots$ is the peel sequence, define
\[
  P_t \defeq \{p_1,\ldots,p_t\},
  \qquad P_0\defeq\varnothing,
  \qquad \intervalP{a}{b}\defeq P_b\setminus P_a.
\]
A vertex $p_t$ is a \emph{valid peel} if it has in-degree zero in $G[V\setminus P_{t-1}]$.

\begin{lemma}[One-way boundary of a source prefix]\label{lem:source-prefix}
Suppose $p_1,\ldots,p_t$ are valid peels.  Then
\[
  E(V\setminus P_t,P_t)=\varnothing.
\]
Consequently, for every $X\subseteq P_t$ and $Y\subseteq V\setminus P_t$,
\[
  |E(X,Y)| = \cut(X)+\cut(Y)-\cut(X\cup Y),
\]
and this quantity can be computed with at most three cut queries.
\end{lemma}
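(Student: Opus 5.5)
The first claim, $E(V\setminus P_t,P_t)=\varnothing$, I will prove by contradiction using only the definition of a valid peel. Suppose some edge $(u,p_j)$ has $u\in V\setminus P_t$ and $p_j\in P_t$, with $1\le j\le t$. Since $u\notin P_t\supseteq P_{j-1}$, the vertex $u$ lies in $V\setminus P_{j-1}$. The same holds for $p_j$, because the peels are distinct and so $p_j\notin P_{j-1}$. The edge $(u,p_j)$ is therefore present in $G[V\setminus P_{j-1}]$, and $p_j$ has positive in-degree there. This contradicts the assumption that $p_j$ is a valid peel, so no such edge exists.

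For the displayed identity, take disjoint $X\subseteq P_t$ and $Y\subseteq V\setminus P_t$. By \cref{lem:symmetric-crossing},
\[
  \cut(X)+\cut(Y)-\cut(X\cup Y)=|E(X,Y)|+|E(Y,X)|.
\]
Any edge in $E(Y,X)$ has its tail in $V\setminus P_t$ and its head in $P_t$. By the first claim there are no such edges, so $E(Y,X)=\varnothing$ and the right-hand side reduces to $|E(X,Y)|$.

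The query bound follows because the three cut values $\cut(X)$, $\cut(Y)$ and $\cut(X\cup Y)$ can each be obtained with one oracle call. Fewer calls are needed in degenerate cases, for example when $X$ or $Y$ is empty, since $\cut(\varnothing)=0$ requires no query; this is why the bound is stated as ``at most'' three.

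None of these steps is a real obstacle. The only point needing care is in the first claim: the offending tail $u$ must still be present at the time $p_j$ is peeled, and this holds because $u\notin P_t$ implies $u\notin P_{j-1}$.
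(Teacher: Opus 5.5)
Your proposal is correct and matches the paper's proof. For the first claim, an arc $(u,p_j)$ with $u\notin P_t$ would enter $p_j$ in $G[V\setminus P_{j-1}]$ and contradict validity; for the identity, apply \cref{lem:symmetric-crossing} and note that $E(Y,X)=\varnothing$.
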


\begin{proof}
Fix an arc $(y,p_j)$ with $p_j\in P_t$.  If $y\notin P_t$, then $y\notin P_{j-1}$, so this arc would
enter $p_j$ from the residual graph at the moment $p_j$ was peeled, contradicting validity.
Thus no arc enters $P_t$ from its complement.  In particular, $E(Y,X)=\varnothing$ for
$X\subseteq P_t$ and $Y\subseteq V\setminus P_t$.  The formula follows from
\cref{lem:symmetric-crossing}.
\end{proof}

We refer to the three-query computation in \cref{lem:source-prefix} as
$\mathsf{DirCount}(X,Y)$.  Notice also that the initial in-degree of a vertex is available in one
query:
\begin{equation}\label{eq:initial-indegree}
  d^-_0(v) \defeq |E(V\setminus\{v\},v)| = \cut(V\setminus\{v\}).
\end{equation}

\section{Topological Sorting by Lazy Source Peeling}\label{sec:toposort}

This section proves \cref{thm:topological-main}.  The algorithm follows source peeling, but it
maintains residual in-degrees only approximately until a vertex becomes a plausible source.

\subsection{Buckets and refinement trees}\label{subsec:trees}

Set
\[
  H \defeq \left\lceil\log_2\max\{n,2\}\right\rceil
  \quad\text{and}\quad
  I\defeq\{0,1,\ldots,H-1\}.
\]
Since the graph is simple, every positive in-degree belongs to $[2^i,2^{i+1})$ for a unique
$i\in I$.

At every time $t$, the unpeeled vertices are partitioned as
\begin{equation}\label{eq:partition}
  V\setminus P_t = Z(t)\ \dot\cup\ A_0(t)\ \dot\cup\cdots\dot\cup\ A_{H-1}(t).
\end{equation}
The set $Z(t)$ contains vertices known to have residual in-degree zero.  A vertex in $A_i(t)$ was
most recently verified to have in-degree at least $2^i$ and less than $2^{i+1}$; its current degree
may have fallen below $2^i$, but the data structure has not yet necessarily inspected its leaf.
Because in-degrees only decrease, vertices move only from higher-indexed buckets to lower-indexed
buckets, and eventually to $Z$.

For each $i\in I$, let $\cT_i$ be a complete binary tree with
$N=2^H$ leaves.  The first $n$ leaves are labeled by the vertices and the remaining leaves are
dummies.  For a node $x\in\cT_i$, let $L(x)\subseteq V$ be the set of real vertex labels below
$x$.  Each tree has height $H$.

The threshold for bucket $i$ is
\begin{equation}\label{eq:lambda}
  \lambda_i \defeq
  \max\left\{1,\left\lfloor\frac{2^i}{4(H+1)}\right\rfloor\right\}.
\end{equation}

Each node $x\in\cT_i$ stores a timestamp $\mathsf{hist}_i(x)$, the most recent iteration in which
$x$ was \emph{synchronized}.  For every active vertex $v\in A_i(t)\cap L(x)$, it also stores an
offset $\mathsf{off}_i(x,v)$.  Let $\mathsf{ent}_i(v)$ denote the iteration in which $v$ entered its
current tenure in $A_i$.

Initially, all histories are zero.  Every vertex of initial in-degree zero is placed in $Z(0)$.
Every other vertex $v$ is placed in the unique bucket
$i=\lfloor\log d^-_0(v)\rfloor$, with $\mathsf{ent}_i(v)=0$ and all its offsets set to zero.

\paragraph{Entering a bucket.}
Suppose $v$ enters $A_i$ at time $t$.  For every node $x$ on the root-to-$v$ path in $\cT_i$, set
\begin{equation}\label{eq:offset-entry}
  \mathsf{off}_i(x,v)
  \defeq
  \bigl|E(\intervalP{\mathsf{hist}_i(x)}{t},v)\bigr|.
\end{equation}
By \cref{lem:source-prefix}, each value in \eqref{eq:offset-entry} costs $O(1)$ cut queries.

\paragraph{Synchronizing a node.}
To synchronize $x\in\cT_i$ at time $t$, set
\begin{equation}\label{eq:synchronize}
  \mathsf{hist}_i(x)\gets t,
  \qquad
  \mathsf{off}_i(x,v)\gets 0
  \quad\text{for all }v\in A_i(t)\cap L(x).
\end{equation}
This operation uses no oracle queries.

\paragraph{The decrease counter.}
For a node $x\in\cT_i$, define
\begin{align}
  \mathsf{Decrease}_i(x,t)
  \defeq{}&
  \bigl|E(\intervalP{\mathsf{hist}_i(x)}{t},
      A_i(t)\cap L(x))\bigr| \notag\\
  &\quad -
  \sum_{v\in A_i(t)\cap L(x)}\mathsf{off}_i(x,v).
  \label{eq:decrease-def}
\end{align}
The first term is obtained by one call to $\mathsf{DirCount}$, hence with $O(1)$ cut queries.
The offsets are locally stored values.

\begin{lemma}[Exact lazy accounting]\label{lem:decrease-exact}
For every bucket $i$, node $x\in\cT_i$, and iteration $t$,
\begin{equation}\label{eq:decrease-exact}
\mathsf{Decrease}_i(x,t)
=
\sum_{v\in A_i(t)\cap L(x)}
  \bigl|E(\intervalP{\max\{\mathsf{hist}_i(x),\mathsf{ent}_i(v)\}}{t},v)\bigr|.
\end{equation}
In particular, it is a nonnegative integer equal to the number of in-edges lost by the currently
active vertices below $x$, counting only losses after both the node's last synchronization and the
vertex's entry into $A_i$.
\end{lemma}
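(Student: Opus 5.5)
The plan is to prove \eqref{eq:decrease-exact} vertex by vertex, by induction on time, through an invariant on the stored offsets. Since the first term of \eqref{eq:decrease-def} is additive over the head set,
\[
\bigl|E(\intervalP{\mathsf{hist}_i(x)}{t},A_i(t)\cap L(x))\bigr|=\sum_{v\in A_i(t)\cap L(x)}\bigl|E(\intervalP{\mathsf{hist}_i(x)}{t},v)\bigr|,
\]
it suffices to establish the following invariant. For every node $x$, every iteration $t$, and every active $v\in A_i(t)\cap L(x)$,
\[
\mathsf{off}_i(x,v)=\bigl|E(\intervalP{\mathsf{hist}_i(x)}{\max\{\mathsf{hist}_i(x),\mathsf{ent}_i(v)\}},v)\bigr|. \tag{$\ast$}
\]
Given $(\ast)$, and since the peel sets are nested, we have $P_{\mathsf{hist}}\subseteq P_{\max}\subseteq P_t$. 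Hence $|E(\intervalP{\mathsf{hist}}{t},v)|-|E(\intervalP{\mathsf{hist}}{\max}, v)| = |E(\intervalP{\max}{t},v)|$. Summing over $v$ then gives \eqref{eq:decrease-exact}. Nonnegativity and integrality are immediate, because the right-hand side is a sum of edge counts. The interpretation as lost in-edges also follows: an arc from a peeled $p_j$ with $j$ in the interval into $v$ is exactly an in-edge of $v$ deleted at step $j$.

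Next I will verify $(\ast)$ by checking each operation that alters $\mathsf{hist}_i(x)$, $\mathsf{ent}_i(v)$, or $\mathsf{off}_i(x,v)$. There are three events.

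\textbf{Initialization.} All histories and offsets are $0$ and $\mathsf{ent}_i(v)=0$, so both sides of $(\ast)$ count arcs from $\intervalP{0}{0}=\varnothing$, which gives $0$.

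\textbf{Entry of $v$ into $A_i$ at time $t$.} We set $\mathsf{ent}_i(v)=t$. Because $\mathsf{hist}_i(x)\le t$, the max equals $t$, and \eqref{eq:offset-entry} sets the offset to exactly $|E(\intervalP{\mathsf{hist}_i(x)}{t},v)|$. This matches $(\ast)$.

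\textbf{Synchronization of $x$ at time $t$.} The new history is $t\ge \mathsf{ent}_i(v)$, so the max equals $\mathsf{hist}_i(x)$, the interval is empty, and the offset $0$ matches. The other nodes and vertices are unaffected.

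Between events, the right-hand side of $(\ast)$ does not depend on $t$, so $(\ast)$ persists as time advances. I also need that a vertex leaving $A_i$ simply drops out of the sum. If it later re-entered $A_i$, that would be a fresh entry, which the entry case already handles; in fact re-entry cannot occur, since buckets only decrease.

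The main subtlety, rather than a true obstacle, is pinning down the conventions. First, I must use that $\mathsf{hist}_i(x)$ and $\mathsf{ent}_i(v)$ are both at most the current $t$. Second, I must fix exactly when within iteration $t$ the entry and synchronization happen relative to the peel of $p_t$. The sets $\intervalP{a}{b}$ have to be read consistently, with $P_t$ already containing $p_t$. If an entry and a synchronization occur in the same iteration, either order still yields an offset of $0$ with the max equal to $t$, so $(\ast)$ holds in that case as well.
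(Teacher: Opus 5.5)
Your proof is correct and takes essentially the same route as the paper. The paper splits on whether $v$ was already active at $x$'s last synchronization (offset $0$, with $\mathsf{ent}_i(v)\le\mathsf{hist}_i(x)$) or entered afterward (offset equal to the pre-entry count, which cancels by nesting of the $P_t$); your invariant $(\ast)$ covers both cases uniformly, and your event-by-event induction spells out the paper's implicit claim that nothing but synchronization changes an offset after entry.
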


\begin{proof}
Fix $v\in A_i(t)\cap L(x)$.  If $v$ was already active when $x$ was last synchronized, then
$\mathsf{ent}_i(v)\le \mathsf{hist}_i(x)$ and \eqref{eq:synchronize} set
$\mathsf{off}_i(x,v)=0$.  Its contribution to \eqref{eq:decrease-def} is therefore
$|E(\intervalP{\mathsf{hist}_i(x)}{t},v)|$.

If $v$ entered later, then \eqref{eq:offset-entry} set
\[
  \mathsf{off}_i(x,v)
  =|E(\intervalP{\mathsf{hist}_i(x)}{\mathsf{ent}_i(v)},v)|.
\]
Subtracting this value from
$|E(\intervalP{\mathsf{hist}_i(x)}{t},v)|$ leaves exactly
$|E(\intervalP{\mathsf{ent}_i(v)}{t},v)|$.  Summing over active vertices proves the lemma.
\end{proof}

\subsection{Opening nodes}\label{subsec:opening}

When a node $x\in\cT_i$ is opened at time $t$, it is first synchronized.  If $x$ is internal, its
two children are tested and recursively opened whenever their decrease counters reach
$\lambda_i$.  If $x$ is a leaf labeled by an active vertex $v$, the algorithm computes the exact
residual in-degree
\begin{equation}\label{eq:current-degree}
  d^-_t(v)
  = d^-_0(v)-|E(P_t,v)|.
\end{equation}
The second term is an $O(1)$-query $\mathsf{DirCount}$ computation.  If
$d^-_t(v)\ge 2^i$, the vertex remains in $A_i$.  Otherwise it is removed from $A_i$: it enters
$Z$ when $d^-_t(v)=0$, and enters bucket
$A_{\lfloor\log d^-_t(v)\rfloor}$ when $d^-_t(v)>0$.

\begin{algorithm}[H]
\caption{$\mathsf{Open}(i,x,t)$}\label{alg:open}
\begin{algorithmic}
  \State Synchronize $x\in\cT_i$ at time $t$ according to \eqref{eq:synchronize}.
  \If{$x$ is a dummy leaf}
    \State \Return
  \ElsIf{$x$ is a leaf labeled by $v$}
    \If{$v\notin A_i(t)$}
      \State \Return
    \EndIf
    \State Compute $d^-_t(v)=d^-_0(v)-|E(P_t,v)|$.
    \If{$d^-_t(v)<2^i$}
      \State Remove $v$ from $A_i(t)$.
      \If{$d^-_t(v)=0$}
        \State Insert $v$ into $Z(t)$.
      \Else
        \State $j\gets\lfloor\log d^-_t(v)\rfloor$.
        \State Insert $v$ into $A_j(t)$ and initialize its offsets by \eqref{eq:offset-entry}.
      \EndIf
    \EndIf
    \State \Return
  \EndIf
  \For{each child $y$ of $x$}
    \If{$\mathsf{Decrease}_i(y,t)\ge\lambda_i$}
      \State $\mathsf{Open}(i,y,t)$.
    \EndIf
  \EndFor
\end{algorithmic}
\end{algorithm}

Synchronizing leaves in \cref{alg:open} is essential: without this step, the same old deletions can
be counted repeatedly each time a leaf is reconsidered.

\subsection{The complete algorithm}\label{subsec:complete-alg}

The main procedure first queries all initial in-degrees and initializes the partition
\eqref{eq:partition}.  It then repeatedly removes an arbitrary vertex from $Z$.  After the removal,
the root of every bucket tree is tested; a root is opened when its counter reaches the bucket
threshold.

\begin{algorithm}[H]
\caption{$\mathsf{TopologicalOrderOrCycle}(G)$}\label{alg:topological}
\begin{algorithmic}
  \State Query $d^-_0(v)=\cut(V\setminus\{v\})$ for every $v\in V$.
  \State Initialize $Z$, the buckets $A_i$, the trees $\cT_i$, all histories, and all offsets.
  \State $P_0\gets\varnothing$ and $t\gets 0$.
  \While{$P_t\neq V$}
    \If{$Z(t)=\varnothing$}
      \State \Return \Cycle.
    \EndIf
    \State Choose an arbitrary $v\in Z(t)$ and remove it from $Z(t)$.
    \State Set $p_{t+1}\gets v$, $P_{t+1}\gets P_t\cup\{v\}$, and $t\gets t+1$.
    \For{$i=0,1,\ldots,H-1$}
      \If{$\mathsf{Decrease}_i(\operatorname{root}(\cT_i),t)\ge\lambda_i$}
        \State $\mathsf{Open}(i,\operatorname{root}(\cT_i),t)$.
      \EndIf
    \EndFor
  \EndWhile
  \State \Return $(p_1,\ldots,p_n)$.
\end{algorithmic}
\end{algorithm}

\subsection{Correctness}\label{subsec:correctness}

The use of \(\mathsf{DirCount}\) is justified inductively.  At the beginning of iteration \(t+1\),
we assume that \(p_1,\ldots,p_t\) are valid peels.  Then \cref{lem:source-prefix} makes every
set-to-set count used during that iteration exact.  The lemmas below show that the maintained
source set is consequently exact, so the next selected vertex is again a valid peel.  The induction
starts at \(t=0\).

The central invariant is that an active vertex cannot silently lose all its incoming edges.

\begin{lemma}[Positive-degree invariant]\label{lem:positive-invariant}
After all tree processing in iteration $t$, every active vertex $v\in A_i(t)$ satisfies
\begin{equation}\label{eq:degree-lower-bound}
  d^-_t(v)
  \ge
  2^i-(H+1)(\lambda_i-1)
  >0.
\end{equation}
\end{lemma}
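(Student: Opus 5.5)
Fix $v\in A_i(t)$ and let $\tau=\mathsf{ent}_i(v)$. When $v$ entered $A_i$, or when its leaf was last examined and it stayed in $A_i$, its exact in-degree was at least $2^i$. So it suffices to bound the in-edges $v$ has lost since the later of these two moments, call it $\sigma$. The target bound is at most $(H+1)(\lambda_i-1)$. Since in-degrees only decrease, $d^-_t(v)\ge 2^i-|E(\intervalP{\sigma}{t},v)|$. Note that every leaf opening synchronizes the leaf, so $\sigma=\max\{\tau,\mathsf{hist}_i(\text{leaf}_v)\}$.

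Let $x_0,\dots,x_H$ be the root-to-leaf path of $v$ in $\cT_i$, with $h_k=\mathsf{hist}_i(x_k)$. The sequence $h_k$ is non-increasing in $k$, because a child is opened only within its parent's opening, at the same time $t$. Set $s_k=\max\{h_k,\tau\}$. Then $s_H=\sigma$ and $s_0\le t$, and we can telescope:
\[
|E(\intervalP{\sigma}{t},v)| = |E(\intervalP{s_0}{t},v)| + \sum_{k=1}^{H} |E(\intervalP{s_k}{s_{k-1}},v)|.
\]
This gives $H+1$ segments, and I would show each is at most $\lambda_i-1$.

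For the first segment, after processing at time $t$ the root was either opened at time $t$ (so the segment is empty) or tested and found to have $\mathsf{Decrease}_i(x_0,t)<\lambda_i$. By \cref{lem:decrease-exact}, the decrease counter is a sum of nonnegative terms, one of which is $|E(\intervalP{s_0}{t},v)|$. Hence that term is at most $\lambda_i-1$.

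For segment $k$, consider time $s_{k-1}$. If $s_{k-1}=\tau$, then $s_k=\tau$ too and the segment is empty. Otherwise $s_{k-1}=h_{k-1}>\tau$, so $x_{k-1}$ was opened at time $h_{k-1}$ while $v$ was already in $A_i$. During that opening its child $x_k$ was tested. If $x_k$ was opened, then $h_k=h_{k-1}$ and the segment is empty. If not, then $\mathsf{Decrease}_i(x_k,h_{k-1})<\lambda_i$, and by \cref{lem:decrease-exact} evaluated at time $h_{k-1}$, the $v$-term $|E(\intervalP{\max\{h_k,\tau\}}{h_{k-1}},v)|$ is at most $\lambda_i-1$.

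The main obstacle is the timing bookkeeping. The lemma must be applied at the moment of the child test, with $v\in A_i$ at that moment. Tests are sequential within an iteration, and $v$'s membership only changes at its own leaf, so this holds. The histories at that moment must also match the current $h_k$, which holds because $x_k$ was not opened since, as any later opening would raise $h_{k-1}$. Finally, $(H+1)(\lambda_i-1)<2^i$ follows from $\lambda_i\le 2^i/(4(H+1))$ when $\lambda_i\ge2$, and trivially when $\lambda_i=1$, giving positivity.
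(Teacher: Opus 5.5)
Your proposal is correct and follows the paper's proof essentially step for step. It uses the same breakpoints $\max\{\mathsf{hist}_i(x_k),\mathsf{ent}_i(v)\}$ along the root-to-leaf path, the same $H+1$ segments each bounded by a failed child or root test via \cref{lem:decrease-exact}, and the same positivity computation. Your extra checks (that $v$ is in $A_i$ at the moment of each child test, and that the child's history at that moment equals its current history) only make explicit what the paper leaves implicit.
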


\begin{proof}
Fix $v\in A_i(t)$.  Let
$x_0,x_1,\ldots,x_H$ be the root-to-leaf path for $v$ in $\cT_i$, with $x_0$ the root and $x_H$
the leaf.  Define
\[
  \tau_j
  \defeq
  \max\{\mathsf{hist}_i(x_j),\mathsf{ent}_i(v)\}
  \qquad (0\le j\le H).
\]
A child can be synchronized only during a recursive call made while its parent is opened.  Hence
$\mathsf{hist}_i(x_0)\ge\cdots\ge\mathsf{hist}_i(x_H)$, and therefore
\[
  \tau_0\ge\tau_1\ge\cdots\ge\tau_H.
\]
At time $\tau_H$, either $v$ has just entered $A_i$, or its leaf has just been opened and its exact
in-degree checked.  In both cases its residual in-degree at that time is at least $2^i$.

Consider an interval $(\tau_{j+1},\tau_j]$ with $\tau_{j+1}<\tau_j$.  Then $x_j$ was opened at
time $\tau_j$, but $x_{j+1}$ was not: otherwise the child's history would be at least $\tau_j$.
Thus
$\mathsf{Decrease}_i(x_{j+1},\tau_j)<\lambda_i$.  By
\cref{lem:decrease-exact}, the number of in-edges lost by $v$ during this interval is at most
$\lambda_i-1$.

The same argument applies to the final interval $(\tau_0,t]$.  The root is tested at the end of
every iteration.  If it was opened after $\tau_0$, then $\tau_0$ would be later; otherwise its
counter at time $t$ is below $\lambda_i$.  Hence $v$ loses at most $\lambda_i-1$ in-edges in this
final interval as well.

There are at most $H+1$ intervals.  Starting from degree at least $2^i$ at time $\tau_H$, we obtain
the first inequality in \eqref{eq:degree-lower-bound}.  If $\lambda_i=1$, the subtractive term is
zero.  Otherwise,
\[
  \lambda_i-1
  < \frac{2^i}{4(H+1)},
\]
so $(H+1)(\lambda_i-1)<2^i/4$.  This proves strict positivity.
\end{proof}

\begin{lemma}[The source set is exact]\label{lem:source-set}
After initialization and after every completed iteration $t$,
\[
  Z(t)=\{v\in V\setminus P_t:d^-_t(v)=0\}.
\]
\end{lemma}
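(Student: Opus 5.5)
We prove the claim by induction on $t$, together with the validity of the peels $p_1,\ldots,p_t$. The inductive hypothesis is that $p_1,\ldots,p_t$ are valid peels and that $Z(t)$ equals the residual-source set.

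\textbf{Base case.} At initialization ($t=0$, $P_0=\varnothing$), $Z(0)$ consists exactly of the vertices with $d^-_0(v)=\cut(V\setminus\{v\})=0$ by \eqref{eq:initial-indegree}. Every other vertex is placed in a bucket. So the identity holds trivially.

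\textbf{Inductive step.} Suppose the claim holds after iteration $t$, and iteration $t+1$ peels $p_{t+1}\in Z(t)$. By hypothesis $d^-_t(p_{t+1})=0$, so $p_{t+1}$ is a valid peel. Hence \cref{lem:source-prefix} applies to $P_{t+1}$, and every $\mathsf{DirCount}$ evaluated during the iteration is exact. It follows that the residual degrees computed via \eqref{eq:current-degree} are exact, and by \cref{lem:decrease-exact} so are the decrease counters. We then prove the two inclusions.

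For $\subseteq$: a vertex of $Z(t+1)$ is either an old member of $Z(t)\setminus\{p_{t+1}\}$ or was inserted during iteration $t+1$. An old member has residual in-degree $0$ at time $t$, and in-degrees only decrease, so it still has in-degree $0$. A newly inserted vertex was placed in $Z$ by \cref{alg:open} only after the exact computation showed $d^-_{t+1}(v)=0$.

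For $\supseteq$: let $v\in V\setminus P_{t+1}$ have $d^-_{t+1}(v)=0$. By the partition \eqref{eq:partition}, $v$ lies either in $Z(t+1)$ or in some $A_i(t+1)$. If it lay in some $A_i(t+1)$ after all tree processing, \cref{lem:positive-invariant} would give $d^-_{t+1}(v)>0$, a contradiction. So $v\in Z(t+1)$.

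\textbf{Main obstacle.} The delicate point is that \cref{lem:positive-invariant} relies on the exactness of the counters, which in turn relies on the validity of the peels. This circularity is broken by ordering the argument inside the iteration:
\begin{enumerate}[label=(\alph*),nosep]
\item validity of $p_{t+1}$ follows from the hypothesis on $Z(t)$;
\item this validity makes every count taken during iteration $t+1$ exact;
\item exact counts yield the invariant and hence the identity for $Z(t+1)$.
\end{enumerate}

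One further check is needed for vertices moved into a lower bucket $A_j$ during iteration $t+1$, possibly after tree $\cT_j$ has already been processed in that iteration. For such a vertex, $\tau_H=\mathsf{ent}_j(v)=t+1$. At that moment it has in-degree at least $2^j$, since it was placed in $A_j$ with $d^-_{t+1}(v)\in[2^j,2^{j+1})$, and no further deletions occur within the iteration. So the invariant holds for it as well, and the same contradiction rules out its lying in a bucket with residual in-degree $0$.
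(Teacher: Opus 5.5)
Your proposal is correct and follows essentially the same argument as the paper. The inclusion $\subseteq$ comes from the exact leaf-degree computation plus monotonicity of residual in-degrees, and $\supseteq$ from the partition \eqref{eq:partition} together with \cref{lem:positive-invariant}; you make explicit the validity-of-peels induction that the paper only states in the preamble of the correctness subsection, and your extra check for vertices dropping into an already-processed lower bucket is already covered by the proof of \cref{lem:positive-invariant}, since there $\tau_0=\cdots=\tau_H=t+1$ and every interval is empty.
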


\begin{proof}
A vertex enters $Z$ only after its exact residual in-degree is computed to be zero.  Residual
in-degrees never increase, so it remains a source until it is peeled.  Conversely, by the partition
\eqref{eq:partition}, every unpeeled vertex outside $Z$ lies in some $A_i$, and
\cref{lem:positive-invariant} gives it positive residual in-degree.
\end{proof}

\begin{lemma}[Correctness of source peeling]\label{lem:topological-correctness}
Algorithm \ref{alg:topological} returns a topological ordering when $G$ is acyclic and returns
\Cycle{} only when $G$ contains a directed cycle.
\end{lemma}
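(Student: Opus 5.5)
The proof is an induction on the iteration counter $t$, with \cref{lem:source-set} as the engine. The invariant is that at the start of iteration $t+1$ the vertices $p_1,\ldots,p_t$ are valid peels. It holds vacuously at $t=0$. Assuming it at time $t$, \cref{lem:source-prefix} makes every $\mathsf{DirCount}$ call made during iteration $t$ exact. The lemmas of \cref{subsec:correctness} were proved under exactly this hypothesis, so \cref{lem:source-set} applies and gives $Z(t)=\{v\in V\setminus P_t : d^-_t(v)=0\}$ after the tree processing of iteration $t$. Any vertex the algorithm removes from $Z(t)$ as $p_{t+1}$ therefore has in-degree zero in $G[V\setminus P_t]$, so it is a valid peel and the invariant propagates. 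One point needs care: the initial in-degrees $d^-_0$ are read from \eqref{eq:initial-indegree} and do not depend on the invariant, while every later count uses $\mathsf{DirCount}$ with $X\subseteq P_t$. These are exactly the counts covered by \cref{lem:source-prefix}.

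The two return branches then follow. If the loop ends with $P_n=V$, every $p_j$ is a valid peel. Hence each arc $(p_a,p_b)$ must have $a<b$: otherwise $p_a$ would still be unpeeled when $p_b$ is peeled, and $p_b$ would have positive residual in-degree. Since $p_1,\ldots,p_n$ lists all vertices, it is a topological ordering. If instead the algorithm returns \Cycle{} at some time $t$, then $Z(t)=\varnothing$ while $P_t\neq V$. By \cref{lem:source-set}, every vertex of the nonempty induced subgraph $G[V\setminus P_t]$ has positive in-degree there. Walking backwards along in-arcs inside this finite subgraph must eventually repeat a vertex, which exhibits a directed cycle in $G$. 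So \Cycle{} is reported only when $G$ has a cycle.

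To conclude the acyclic case, note that a DAG has no directed cycle, so by the previous paragraph the \Cycle{} branch is never taken. Each iteration peels one new vertex, so the loop terminates after $n$ iterations with a topological ordering. The only real obstacle is making sure the induction is not circular. \cref{lem:positive-invariant} and \cref{lem:decrease-exact} rely on exact counts, and those counts are exact only because of the validity of earlier peels. I would state explicitly that all queries in iteration $t$ involve only $P_t$ and its complement, so the inductive hypothesis at time $t$ suffices and validity of $p_{t+1}$ is never needed before it is established.
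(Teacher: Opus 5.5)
Your proof is correct and follows the paper's argument. All peels are valid by \cref{lem:source-set}, justified by the same induction on valid prefixes that the paper sets up at the start of \cref{subsec:correctness}; a back-arc would enter an earlier vertex at its peel time; and a \Cycle{} return yields a cycle by walking backwards along in-arcs. The differences are cosmetic: you get the acyclic case as the contrapositive of \Cycle{}-soundness rather than from the fact that every nonempty DAG has a source, and your indexing of the induction (queries at time $t$ involve only $P_t$, so validity of $p_{t+1}$ is never used before it is established) is stated slightly more carefully than in the paper.
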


\begin{proof}
By \cref{lem:source-set}, every selected vertex is a source of the current residual graph.  Thus
all peels are valid.  If the algorithm peels every vertex, the resulting order is topological: an
arc from a later vertex into an earlier vertex would have entered the earlier vertex at the moment
it was peeled.

Suppose instead that the algorithm returns \Cycle.  The residual graph is nonempty and,
by \cref{lem:source-set}, has no vertex of in-degree zero.  Starting from any residual vertex and
repeatedly following an incoming arc must eventually revisit a vertex, producing a directed cycle.
Conversely, every nonempty DAG has a source, so on a DAG the algorithm cannot return
\Cycle{} before all vertices are peeled.
\end{proof}

\subsection{Query complexity}\label{subsec:query-complexity}

It remains to amortize the tree activity.  For a node $x$, write
$r(x)=|L(x)|$ for its number of real leaves.

\begin{lemma}[Opening bound]\label{lem:opening-bound}
For a fixed bucket $i$ and node $x\in\cT_i$, the number of times $x$ is opened is
$O((H+1)r(x))$.
\end{lemma}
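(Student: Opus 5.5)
The plan is a charging argument: each opening of $x$ is paid for by at least $\lambda_i$ in-edge deletions in the subtree of $x$, each deletion pays at most once, and the total supply of deletions below $x$ is $O(2^i r(x))$.

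\textbf{Each opening is triggered by $\lambda_i$ deletions.} Every call $\mathsf{Open}(i,x,t)$ happens only after the test $\mathsf{Decrease}_i(x,t)\ge\lambda_i$. This holds both for the root, in \cref{alg:topological}, and for a child, inside \cref{alg:open}. Let $t_1<t_2<\cdots$ be the times at which $x$ is opened. Since distinct calls $\mathsf{Open}(i,x,\cdot)$ happen at distinct iterations, the times are strictly increasing. Each opening synchronizes $x$, so at time $t_k$ we have $\mathsf{hist}_i(x)=t_{k-1}$, with $t_0=0$. By \cref{lem:decrease-exact},
\[
\lambda_i\le \mathsf{Decrease}_i(x,t_k)=\sum_{v\in A_i(t_k)\cap L(x)}\bigl|E(\intervalP{\max\{t_{k-1},\mathsf{ent}_i(v)\}}{t_k},v)\bigr|.
\]

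\textbf{Each deletion is charged at most once.} Call a pair $(u,v)$ with $u\to v\in E$, $v\in L(x)$, a deletion event at the time $u$ is peeled. The event is charged to opening $k$ only if $u$ is peeled in $(t_{k-1},t_k]$. These intervals are pairwise disjoint, so each event is charged to at most one opening. The event is also charged only if $v$ belongs to $A_i$ at time $t_k$ and $u$ is peeled after $\mathsf{ent}_i(v)$. Hence the charged events for $v$ all lie inside a tenure of $v$ in $A_i$.

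\textbf{Bounding the supply of deletions per vertex.} Next I would argue that $v$ has at most one tenure in $A_i$. Vertices only move to strictly lower buckets or to $Z$, and a leaf check that finds $d^-_t(v)\ge 2^i$ keeps $v$ in $A_i$ without starting a new tenure. So the tenure is a single contiguous stretch. At its start $v$ has residual in-degree below $2^{i+1}$. This holds at initialization by the choice $i=\lfloor\log d^-_0(v)\rfloor$, and at a later entry by the choice $j=\lfloor\log d^-_t(v)\rfloor$. Consequently fewer than $2^{i+1}$ of its in-edges can be deleted during the tenure. Summing over $v\in L(x)$, the total number of chargeable events is less than $2^{i+1}r(x)$. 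Therefore the number of openings of $x$ is at most $2^{i+1}r(x)/\lambda_i$.

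\textbf{Comparing $2^{i+1}$ with $\lambda_i$.} It remains to show $2^{i+1}/\lambda_i=O(H+1)$, which I split into two cases.
\begin{itemize}
\item If $\lambda_i=1$, then $\lfloor 2^i/(4(H+1))\rfloor\le 1$, so $2^i<8(H+1)$ and $2^{i+1}<16(H+1)$.
\item Otherwise $y=2^i/(4(H+1))\ge 1$, and $\lfloor y\rfloor\ge y/2$ gives $\lambda_i\ge 2^i/(8(H+1))$. Hence $2^{i+1}/\lambda_i\le 16(H+1)$.
\end{itemize}
In both cases the number of openings is at most $16(H+1)r(x)$.

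\textbf{Main obstacle.} The delicate step is the disjointness of charges. It relies on $\mathsf{hist}_i(x)$ being reset exactly at each opening and on the offsets excluding pre-entry losses. Both are supplied by \cref{lem:decrease-exact} and the synchronization rule \eqref{eq:synchronize}. I would also note that vertices leaving $A_i$ between openings only reduce the counted sum, so they do no harm.
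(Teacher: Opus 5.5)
Your proposal is correct and is essentially the paper's own charging argument. Each vertex has a single tenure in $A_i$, so the supply of deletion events is at most $2^{i+1}r(x)$; \cref{lem:decrease-exact} together with the synchronization rule makes the per-opening charges disjoint; and the same two-case comparison of $2^{i+1}$ with $\lambda_i$ finishes the proof. Your bookkeeping is only marginally tighter (you drop the paper's harmless additive $+1$ and state the explicit bound $16(H+1)r(x)$), and your "less than $2^{i+1}r(x)$" should read "at most" when $r(x)=0$, a case the paper treats separately.
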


\begin{proof}
If $r(x)=0$, then every decrease counter below $x$ is identically zero and $x$ is never opened.
Assume henceforth that $r(x)\ge 1$.

A vertex enters $A_i$ at most once, because residual in-degrees only decrease.  At entry its degree
is less than $2^{i+1}$.  Therefore, while active in $A_i$, it can be incident to fewer than
$2^{i+1}$ peeled in-edges.  Across all active tenures of vertices in $L(x)$, at most
$2^{i+1}r(x)$ deletion events can contribute to $x$'s decrease counter.

Every opening of $x$ is triggered by the condition
\[
  \mathsf{Decrease}_i(x,t)\ge\lambda_i.
\]
Synchronization resets the counter for all currently active vertices below $x$.  Moreover,
\cref{lem:decrease-exact} ensures that the events charged to distinct openings are disjoint.
Hence the number of openings is at most
\[
  1+\frac{2^{i+1}r(x)}{\lambda_i}.
\]
If $\lambda_i=1$, then
$\lfloor 2^i/(4(H+1))\rfloor\le1$, which implies $2^i<8(H+1)$.  If
$\lambda_i\ge2$, then
$\lambda_i\ge 2^i/(8(H+1))$.  In either case
$2^{i+1}/\lambda_i=O(H+1)$, proving the claim.
\end{proof}

\begin{lemma}[Queries within one tree]\label{lem:one-tree-queries}
For a fixed bucket $i$, all decrease-counter evaluations and all exact leaf-degree evaluations use
$O(nH^2)$ cut queries over the entire execution.
\end{lemma}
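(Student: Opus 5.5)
The plan is to charge every query in tree $\cT_i$ to an opening of some node, and then apply \cref{lem:opening-bound}. By \cref{lem:source-prefix}, each decrease-counter evaluation and each exact leaf-degree computation \eqref{eq:current-degree} costs $O(1)$ cut queries, so it suffices to count these evaluations. There are three kinds.
\begin{enumerate}[label=(\alph*),nosep]
\item The root tests at the end of every iteration: at most $n$ iterations, so $O(n)$ evaluations.
\item Child tests inside \cref{alg:open}: each opening of an internal node $x$ tests exactly two children, so this cost is at most twice the number of openings of internal nodes.
\item Leaf computations: each opening of a leaf performs at most one exact degree computation.
\end{enumerate}
In total, the number of evaluations is $O(n)$ plus a constant times the total number of openings in $\cT_i$.

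Next I sum \cref{lem:opening-bound} over all nodes of $\cT_i$, level by level. At any fixed depth the sets $L(x)$ partition the $n$ real vertices, so $\sum_{x\text{ at depth }d} r(x)=n$. Nodes with $r(x)=0$ are never opened, so dummy subtrees contribute nothing. There are $H+1$ depths, so the total number of openings is
\[
  \sum_{d=0}^{H}\ \sum_{x\text{ at depth }d} O\bigl((H+1)r(x)\bigr) = O\bigl((H+1)^2 n\bigr)=O(nH^2).
\]
Adding the $O(n)$ root tests and multiplying by the constant per-evaluation cost gives $O(nH^2)$ cut queries for tree $i$.

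The step that needs care is checking that every evaluation really is charged to an opening, with no evaluations occurring outside openings apart from the root tests. For example, a leaf opened for a vertex no longer in $A_i$ returns immediately and costs no queries. Also, the $O(1)$ additive term in the per-node bound $1+2^{i+1}r(x)/\lambda_i$ must be absorbed only for nodes with $r(x)\ge1$, which \cref{lem:opening-bound} already handles.

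The offset initializations \eqref{eq:offset-entry} happen when a vertex enters a bucket, and these queries are charged to the destination tree rather than to $\cT_i$. Each vertex enters each bucket at most once and needs $H+1$ offsets, each costing $O(1)$ queries, so over all vertices this is $O(nH)$ per tree and fits within the $O(nH^2)$ bound. I would note this explicitly so that the accounting is complete when the bounds are later summed over $i\in I$.
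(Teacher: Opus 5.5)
Your proof is correct and follows essentially the same route as the paper. The root tests cost $O(n)$, every child test is charged to an opening of its parent and every exact degree computation to an opening of its leaf, and \cref{lem:opening-bound} then bounds the openings. The only difference is bookkeeping: you sum the per-node bound $O((H+1)r(x))$ exactly over each depth using $\sum_{x\text{ at depth }d} r(x)=n$, whereas the paper bounds each parent's openings crudely by $O(H2^{j+1})$ and multiplies by the $N/2^j$ nodes at height $j$; both give $O(nH)$ per level and $O(nH^2)$ in total.
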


\begin{proof}
The root counter is evaluated once after each peel, for $O(n)$ evaluations.

For $0\le j<H$, consider nodes whose subtrees contain $2^j$ padded leaves.  A nonroot node of
this height is queried only when its parent is opened.  Its parent has at most $2^{j+1}$ real
leaves, and by \cref{lem:opening-bound} is opened $O(H2^{j+1})$ times.  There are
$N/2^j$ nodes at height $j$, where $N=2^H=O(n)$. Thus all nodes at this height are
queried at most
\[
  \frac{N}{2^j}\cdot O(H2^{j+1})=O(NH)=O(nH)
\]
times.  Summing over $H$ heights gives $O(nH^2)$ decrease-counter evaluations.  Each evaluation
uses $O(1)$ cut queries by \cref{lem:source-prefix} and \eqref{eq:decrease-def}.

A real leaf has $r(x)=1$, so \cref{lem:opening-bound} implies that it is opened $O(H)$ times.
There are $n$ real leaves, and each exact degree computation uses $O(1)$ queries.  The resulting
$O(nH)$ additional queries are dominated by $O(nH^2)$.
\end{proof}

\begin{proof}[Proof of \cref{thm:topological-main}]
Correctness follows from \cref{lem:topological-correctness}.  We count queries.

The initial in-degrees require $n$ cut queries.  By \cref{lem:one-tree-queries}, each of the $H$
bucket trees contributes $O(nH^2)$ queries, for $O(nH^3)$ total.

It remains to account for offset initialization when a vertex enters a new bucket.  Each entry
computes one $O(1)$-query offset for each of $H+1$ ancestors.  A vertex moves only to lower bucket
indices and therefore enters at most $H$ buckets over the whole execution.  All offset
initializations consequently use $O(nH^2)$ queries.

Since $H=O(\log n)$, the total is $O(n\log^3 n)$.
\end{proof}

\begin{remark}[Local computation]\label{rem:local-time}
The proof optimizes oracle calls, not running time.  A direct implementation stores one history per
tree node and one offset for each active vertex--ancestor pair, using $O(n\log n)$ words.  Explicitly
resetting all offsets below an opened node still gives polynomial local running time.  More refined
lazy tags can reduce this overhead, but are unnecessary for the query bound.
\end{remark}

\section{Single-Source Reachability}\label{sec:reachability}

We now prove \cref{thm:reach-main}.  Throughout this section, $G$ is a DAG and
$\pi:V\to\{1,\ldots,n\}$ is a fixed topological ordering, so every arc $(u,v)$ satisfies
$\pi(u)<\pi(v)$.

\subsection{Counting and finding an outgoing edge}\label{subsec:find-edge}

For a vertex $v$ and a set $S\subseteq V\setminus\{v\}$, define the portion of $S$ after $v$ by
\[
  S_v^+ \defeq \{u\in S:\pi(v)<\pi(u)\}.
\]

\begin{lemma}[Outgoing edge count]\label{lem:outgoing-count}
For every $v\in V$ and $S\subseteq V\setminus\{v\}$,
\begin{equation}\label{eq:outgoing-count}
  |E(v,S)|
  = \cut(\{v\})+\cut(S_v^+)-\cut(S_v^+\cup\{v\}).
\end{equation}
Thus $|E(v,S)|$ can be computed with at most three cut queries.
\end{lemma}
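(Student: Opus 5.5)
The plan is to apply \cref{lem:symmetric-crossing} with $X=\{v\}$ and $Y=S_v^+$; these are disjoint because $v\notin S$. The identity gives
\[
  \cut(\{v\})+\cut(S_v^+)-\cut(S_v^+\cup\{v\})
  = |E(v,S_v^+)|+|E(S_v^+,v)|.
\]
It then remains to show two facts: the reverse term vanishes, and $|E(v,S_v^+)|=|E(v,S)|$.

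Both facts come from the topological ordering $\pi$, under which every arc $(a,b)$ satisfies $\pi(a)<\pi(b)$.

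\emph{The reverse term vanishes.} An arc $(u,v)$ with $u\in S_v^+$ would need $\pi(u)<\pi(v)$. This contradicts $\pi(v)<\pi(u)$, so $E(S_v^+,v)=\varnothing$.

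\emph{Restricting to $S_v^+$ loses nothing.} Any arc $(v,u)$ with $u\in S$ has $\pi(v)<\pi(u)$, so $u\in S_v^+$. Conversely, every arc from $v$ into $S_v^+$ is an arc from $v$ into $S$. Hence $E(v,S)=E(v,S_v^+)$.

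Substituting both facts into the identity yields \eqref{eq:outgoing-count}. The three cut values $\cut(\{v\})$, $\cut(S_v^+)$ and $\cut(S_v^+\cup\{v\})$ account for the three queries; $S_v^+$ is computed locally from $\pi$ at no query cost.

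No step here is a real obstacle. The only care needed is the degenerate case $S_v^+=\varnothing$: there $\cut(\varnothing)=0$ and the formula reduces to $\cut(\{v\})-\cut(\{v\})=0$, which is correct. This is also why the lemma says \emph{at most} three queries.
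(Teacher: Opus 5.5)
Your proposal is correct and follows the paper's proof essentially verbatim: apply \cref{lem:symmetric-crossing} to $\{v\}$ and $S_v^+$, then use the topological order to show $E(S_v^+,v)=\varnothing$ and $E(v,S)=E(v,S_v^+)$. Your remark on the degenerate case $S_v^+=\varnothing$ is a harmless addition that the paper leaves implicit.
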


\begin{proof}
No edge leaves $v$ for a vertex preceding $v$, so $E(v,S)=E(v,S_v^+)$.  No edge enters $v$ from
a vertex following $v$, so $E(S_v^+,v)=\varnothing$.  Applying
\cref{lem:symmetric-crossing} to $\{v\}$ and $S_v^+$ gives \eqref{eq:outgoing-count}.
\end{proof}

The minus sign in \eqref{eq:outgoing-count} is important: the union cut removes edges from the two
sets to their common exterior, leaving only edges crossing between them.

\begin{lemma}[Finding one outgoing edge]\label{lem:find-outneighbor}
Given $v\in V$ and $S\subseteq V\setminus\{v\}$, one can either certify that $E(v,S)=\varnothing$
or return a vertex $u\in S$ with $(v,u)\in E$ using $O(\log(|S|+1))$ cut queries.
\end{lemma}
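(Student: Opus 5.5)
The plan is a binary search driven by the counting formula of \cref{lem:outgoing-count}, which gives $|E(v,T)|$ exactly with at most three cut queries for any $T\subseteq V\setminus\{v\}$.

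\emph{Initial test.} First compute $|E(v,S)|$ with three queries. If the count is zero, we certify $E(v,S)=\varnothing$ and stop. (The degenerate case $S=\varnothing$ is handled trivially, so the $+1$ inside the logarithm is harmless.)

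\emph{Halving step.} Otherwise, maintain a current set $T\subseteq S$ with the invariant $|E(v,T)|\ge 1$. While $|T|\ge 2$, do the following.
\begin{enumerate}[nosep]
\item Split $T$ arbitrarily into two halves $T_1,T_2$, each of size at most $\lceil |T|/2\rceil$.
\item Compute $|E(v,T_1)|$ with three queries.
\item If this count is positive, set $T\gets T_1$. Otherwise $|E(v,T_2)|=|E(v,T)|-|E(v,T_1)|\ge 1$, so set $T\gets T_2$ with no additional query.
\end{enumerate}

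Because edge counts are additive over disjoint sets, $|E(v,T)|=|E(v,T_1)|+|E(v,T_2)|$, and the invariant is preserved at each step. When $|T|=1$, say $T=\{u\}$, the invariant gives $|E(v,\{u\})|\ge1$, so $(v,u)\in E$ and we return $u$.

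\emph{Query count.} Each round costs at most three queries and roughly halves $|T|$. So there are $O(\log |S|)$ rounds, plus the initial test, for a total of $O(\log(|S|+1))$ queries.

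I see no real obstacle. The only point needing care is that \cref{lem:outgoing-count} applies to every subset $T\subseteq S\subseteq V\setminus\{v\}$ without further restriction, since it internally restricts to $T_v^+$. This makes every count exact, so the search never follows a wrong branch.
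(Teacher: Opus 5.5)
Your proposal is correct and follows the paper's proof: first compute $|E(v,S)|$ via \cref{lem:outgoing-count}, then repeatedly halve the candidate set and keep whichever half has a positive count, using $O(1)$ queries per round. Your explicit invariant $|E(v,T)|\ge 1$ and your remark that the second half needs no extra query make the paper's argument more explicit without changing it.
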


\begin{proof}
First evaluate $|E(v,S)|$ using \cref{lem:outgoing-count}.  If it is zero, return that no edge
exists.  Otherwise split $S$ into two parts $S_1,S_2$ of sizes differing by at most one and evaluate
$|E(v,S_1)|$.  Recurse on $S_1$ if the value is positive, and otherwise recurse on $S_2$.  The
candidate set shrinks by a factor of at least two at every step, and a singleton positive set
identifies an out-neighbor.  Each step uses $O(1)$ queries.
\end{proof}

\subsection{Depth-first search through the oracle}\label{subsec:dfs}

The reachability algorithm is an ordinary depth-first search whose adjacency-list operation is
implemented by \cref{lem:find-outneighbor}.  It maintains a global visited set $R$.

\begin{algorithm}[H]
\caption{$\mathsf{Explore}(v)$}\label{alg:explore}
\begin{algorithmic}
  \State Insert $v$ into the global visited set $R$.
  \While{$|E(v,V\setminus R)|>0$}
    \State Use \cref{lem:find-outneighbor} to find $u\in V\setminus R$ with $(v,u)\in E$.
    \State $\mathsf{Explore}(u)$.
  \EndWhile
\end{algorithmic}
\end{algorithm}

The algorithm initializes $R=\varnothing$, calls $\mathsf{Explore}(s)$, and returns $R$.

\begin{lemma}[Reachability correctness]\label{lem:reach-correctness}
At termination, $R=\Reach_G(s)$.
\end{lemma}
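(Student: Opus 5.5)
We prove the two inclusions $R\subseteq\Reach_G(s)$ and $\Reach_G(s)\subseteq R$ separately, after first checking that the procedure terminates.

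\emph{Termination.} Each call $\mathsf{Explore}(u)$ is made only on a vertex $u\in V\setminus R$, and $u$ is inserted into $R$ immediately. Hence every vertex is explored at most once, so there are at most $n$ calls. Inside a call to $\mathsf{Explore}(v)$, each iteration of the while loop performs one recursive call. That call adds at least one new vertex to $R$, so the loop runs at most $n$ times. Correctness of each loop test and each edge found follows from \cref{lem:outgoing-count} and \cref{lem:find-outneighbor}, applied with $S=V\setminus R\subseteq V\setminus\{v\}$. This set is valid because $v\in R$ at that moment.

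\emph{Soundness ($R\subseteq\Reach_G(s)$).} We induct on the order in which vertices are inserted into $R$. The first inserted vertex is $s$. Every later vertex $u$ is inserted by a call $\mathsf{Explore}(u)$ issued from inside $\mathsf{Explore}(v)$, where $v$ was inserted earlier and \cref{lem:find-outneighbor} certified $(v,u)\in E$. By induction $v$ is reachable from $s$, so $u$ is as well.

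\emph{Completeness ($\Reach_G(s)\subseteq R$).} This is the main step. The key invariant is: when $\mathsf{Explore}(v)$ returns, every out-neighbor of $v$ lies in $R$. This holds because the while loop exits only when $|E(v,V\setminus R)|=0$, as computed exactly by \cref{lem:outgoing-count}, and $R$ only grows afterwards. Since $\mathsf{Explore}(s)$ returns and every vertex in $R$ had its own call return before termination, $R$ is closed under out-edges at termination.

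Now suppose some $w\in\Reach_G(s)$ is not in $R$. Take a path $s=u_0\to u_1\to\cdots\to u_k=w$. Since $u_0\in R$, there is a first index $j$ with $u_j\notin R$, and then $u_{j-1}\in R$. The edge $(u_{j-1},u_j)$ contradicts closure of $R$ under out-edges. Hence every reachable vertex lies in $R$.
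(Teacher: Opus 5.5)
Your proposal is correct and follows essentially the same route as the paper: soundness because each inserted vertex is reached along an arc from an already-inserted vertex, and completeness by taking the first vertex of an $s$--$w$ path missing from $R$ and contradicting the exit condition of the while loop in the call on its predecessor. Your explicit termination check and your phrasing of the exit condition as closure of $R$ under out-edges simply spell out details the paper leaves implicit.
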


\begin{proof}
Every inserted vertex is reached by following an arc from an already reached vertex, so
$R\subseteq\Reach_G(s)$.

For the reverse inclusion, suppose some reachable vertex $w$ is absent at termination.  Fix a path
$s=v_0,v_1,\ldots,v_k=w$ and let $v_j$ be the first vertex on this path absent from $R$.  Then
$v_{j-1}\in R$.  When the call $\mathsf{Explore}(v_{j-1})$ terminated, the vertex $v_j$ was still
unvisited, and the arc $(v_{j-1},v_j)$ showed that
$|E(v_{j-1},V\setminus R)|>0$.  This contradicts the termination condition of the while loop.
\end{proof}

\begin{proof}[Proof of \cref{thm:reach-main}]
Every recursive call inserts a previously unvisited vertex, so there are at most $n$ calls.  Every
successful while-loop iteration discovers a new vertex and therefore occurs at most $n-1$ times.
Each such iteration invokes \cref{lem:find-outneighbor} and uses $O(\log n)$ queries.  In addition,
there is one final zero test for every recursive call and one positive test for every successful
iteration, each costing $O(1)$ queries by \cref{lem:outgoing-count}.  The total is
$O(n\log n)$.
\end{proof}

\begin{proof}[Proof of \cref{cor:reach-from-scratch}]
First run \cref{alg:topological}.  On a DAG it returns a topological ordering using
$O(n\log^3 n)$ queries by \cref{thm:topological-main}.  Then run
\cref{alg:explore}, which uses $O(n\log n)$ additional queries by \cref{thm:reach-main}.
\end{proof}

\section*{Acknowledgments}

The authors developed all ideas, arguments, and proofs and prepared the manuscript in full detail. ChatGPT 5.6 was subsequently used solely for stylistic polishing; it did not contribute to the paper’s technical content.

\bibliographystyle{alpha}
\bibliography{references}

\end{document}